\newtheorem{theorem}{Theorem}
\newtheorem{proposition}{Proposition}
\newtheorem{remark}{Remark}
\newtheorem{lemma}{Lemma}
\newenvironment{proof}[1][Proof]{\noindent\textbf{#1.} }{\ \rule{0.5em}{0.5em}}
\newcommand{\by}{\mathbf{y}}
\newcommand{\bx}{\mathbf{x}}
\newcommand{\bz}{\mathbf{z}}
\newcommand{\bZ}{\mathbf{Z}}
\newcommand{\bs}{\mathbf{s}}
\newcommand{\bS}{\mathbf{S}}
\newcommand{\RR}{\mathbb{R}}
\newcommand{\m}{\mathcal}
\newcommand{\simplex}{\m{P}^{|\m{Z}|}}
\newcommand{\bhs}{\mathbf{\hat{s}}^n}
\newcommand{\bhS}{\mathbf{\hat{S}}^n}
\newcommand\restr[2]{{
  \left.\kern-\nulldelimiterspace 
  #1 
  \vphantom{\big|} 
  \right|_{#2} 
  }}
\newcommand{\argmin}{\operatornamewithlimits{argmin}}
\begin{document}
\allowdisplaybreaks

\title{Subset--Universal Lossy Compression}

\author{\IEEEauthorblockN{Or Ordentlich}
\IEEEauthorblockA{Tel Aviv University\\
ordent@eng.tau.ac.il}
\and
\IEEEauthorblockN{Ofer Shayevitz}
\IEEEauthorblockA{Tel Aviv University\\
ofersha@eng.tau.ac.il}
\IEEEoverridecommandlockouts
\IEEEcompsocitemizethanks{
\IEEEcompsocthanksitem
The work of O. Ordentlich was supported by the Adams Fellowship Program of the Israel Academy of Sciences and Humanities, a fellowship from The Yitzhak and Chaya Weinstein Research Institute for Signal Processing at Tel Aviv University and the Feder Family Award. The work of O. Shayevitz was supported by the Israel Science Foundation under grant agreement no. 1367/14.}}

\maketitle

\begin{abstract}
A lossy source code $\mathcal{C}$ with rate $R$ for a discrete memoryless source $S$ is called subset--universal if for every $0<R'< R$, almost every subset of $2^{nR'}$ of its codewords achieves average distortion close to the source's distortion-rate function $D(R')$. In this paper we prove the asymptotic existence of such codes. Moreover, we show the asymptotic existence of a code that is subset--universal with respect to all sources with the same alphabet. 
\end{abstract}

\section{Introduction}
\label{sec:Into}

To motivate the topic studied in this work, let us consider the following scenario. Let $S$ be a discrete source with a probability mass function (pmf) $P_S$ over an alphabet $\m{S}$. Assume one wants to convey $n$ i.i.d. instances of $S$ over a deterministic channel $f:\m{X}^n\mapsto\m{Y}^n$, with the smallest possible distortion. If the channel law $f(\cdot)$ is known at both ends, the channel corresponds to a bit pipe whose capacity is the normalized logarithm of the function's image size, i.e., $C=\tfrac{1}{n}\log{|f(\m{X}^n)|}$. In this case a separation approach, where the source is first compressed with rate $C$ and then the compression index is transmitted over the channel, is optimal and achieves the best possible average distortion $D_{P_S}(C)$, where $D_{P_S}(\cdot)$ is the distortion-rate function of the source $P_S$.

When only the encoder knows the function $f(\cdot)$ but the decoder does not, it may be possible to ``learn'' the channel law if the class of possible functions is sufficiently small. For instance, if $f(\cdot)$ tensorizes, i.e., if $f(\bx^n)=h(x_1),\ldots,h(x_n)$, it is possible to learn $f(\cdot)$ with negligible cost for $n$ large enough. In this case, the separation based approach continues to be asymptotically optimal. Moreover, sometimes even partial knowledge of $f(\cdot)$ at the decoder is sufficient for separation to be optimal. For example, if the ``channel'' is a binary memory device with $\approx pn$ cells stuck at either $0$ or $1$, and whose locations are known only to the encoder, a reliable communication rate approaching $\tfrac{1}{n}\log{|f(\m{X}^n)|}$ can be achieved via Gelfand-Pinsker coding~\cite{gp80,heg83}, and subsequently separation can asymptotically attain the optimal distortion.

A separation approach, however, is completely useless when $f(\cdot)$ can be any \emph{arbitrary} function known only to the encoder. This follows from the fact that the capacity of the compound channel $\m{F}$ that consists of all $|\m{Y}|^{n|\m{X}|^n}$ possible functions from $\m{X}^n$ to $\m{Y}^n$ is zero. It is therefore tempting to jump to the conclusion that the smallest average distortion that can be attained in this setting is $D_{P_S}(0)$. But is this indeed the case? We will show that the answer is negative, and that in fact there exists a coding scheme that achieves average distortion $D_{P_S}(\tfrac{1}{n}\log|f(\m{X}^n)|)$ for almost every deterministic channel.

Note that the encoder, which knows the function $f(\cdot)$, can deterministically impose any output from the set $f(\m{X}^n)\subseteq\m{Y}^n$. However, due to its ignorance, the decoder does not know how to translate the possible outputs to a vector of $\log|f(\m{X}^n)|$ bits. Instead, the encoder and decoder can agree in advance on a mapping $g:\m{Y}^n\mapsto \m{\hat{S}}^n$
from each possible channel output to a source reconstruction sequence. The set $g(\m{Y}^n)$ of all possible reconstructions can be thought of as a source code $\m{C}$ of rate $R=\log|\m{Y}|$ bits per source symbol. The channel's effect is in diluting $\m{C}$ to the source code $\m{C}'=g(f(\m{X}^n))\subseteq\m{C}$ whose rate is $R'=\tfrac{1}{n}\log|f(\m{X}^n)|$ bits per symbol. In other words, a deterministic channel $f(\cdot)$ with capacity $R'$ chooses a subset of $2^{nR'}$ codewords from the $2^{nR}$ codewords in $\m{C}$.

This motivates the study of \emph{subset--universal} source codes. A subset--universal source code of rate $R$ has the property that for any $0<R'< R$, almost every subset of $2^{nR'}$ of its codewords is close to being optimal, in the sense that its induced average distortion is close to the distortion-rate function $D_{P_S}(R')$. Our main result is the asymptotic existence of such codes. In fact, we show the asymptotic existence of a code that is subset--universal with respect to any i.i.d. source distribution over the alphabet $\m{S}$.

Returning to our discussion on joint-source channel coding over a deterministic channel known only at the encoder, the existence of a subset-universal code implies that for almost every choice of $f(\cdot)$, an average distortion of $D_{P_S}(\tfrac{1}{n}\log|f(\m{X}^n)|)$ can be asymptotically achieved.

\section{Main Result}
\label{sec:main}

The entropy of a random variable $Y$ with probability mass function (pmf) $P_Y$ over the alphabet $\m{Y}$, is defined as
\begin{align}
H(Y)\triangleq -\sum_{y\in\m{Y}}P_Y(y)\log P_Y(y).\nonumber
\end{align}
For a pair of random variables $(Y,Z)$ with a joint pmf $P_{YZ}=P_Y P_{Z|Y}$ over the product alphabet $\m{Y}\times\m{Z}$ , the conditional entropy is defined as
\begin{align}
H(Z|Y)\triangleq -\sum_{(y,z)\in(\m{Y}\times\m{Z})}P_{YZ}(y,z)\log P_{Z|Y}(z|y)\nonumber.
\end{align}
and the mutual information is defined as
\begin{align}
I(Y;Z)\triangleq H(Y)-H(Y|Z)=H(Z)-H(Z|Y).\nonumber
\end{align}
For two distributions $P$ and $Q$ defined on the same alphabet $\m{Z}$, the KL-divergence is defined as
\begin{align}
D(P||Q)\triangleq \sum_{z\in\m{Z}} P(z)\log\frac{P(z)}{Q(z)}.\nonumber
\end{align}

Let $S$ be a discrete memoryless source (DMS) over the alphabet $\m{S}$ with pmf $P_S$, $\m{\hat{S}}$ a reconstruction alphabet, and \mbox{$d:\m{S}\times\m{\hat{S}}\mapsto \RR^+$} a bounded single-letter distortion measure. A $(2^{nR},n)$ lossy source code consists of an encoder that assigns an index $m(\bs^n)\in\{1,2,\ldots,2^{nR}\}$ to each $n$-dimensional vector $\bs^n\in\m{S}^n$, and a decoder that assigns an estimate $\mathbf{\hat{s}}^n(m)\in\m{\hat{S}}^n$ to each index $m\in\{1,2,\ldots,2^{nR}\}$. The set $\m{C}=\left\{\mathbf{\hat{s}}^n(1),\ldots,\mathbf{\hat{s}}^n(2^{nR})\right\}$ constitutes the codebook. We will assume throughout that the encoder assigns to each source sequence the index $m^*(\bs^n)$ according to the optimal rule
\begin{align}
m^*(\bs^n)=\argmin_{m\in\{1,2,\ldots,2^{nR}\}}\frac{1}{n}\sum_{i=1}^n d\left(s_i,(\bhs(m))_i\right),\nonumber
\end{align}
such that the lossy source code is completely specified by the codebook $\m{C}$ (and the distortion measure $d$).

The expected distortion associated with a lossy source code $\m{C}$ is defined as
\begin{align}
\mathbb{E}_{\bS^n}\left(d(\bS^n,\mathbf{\hat{S}}^n)\right)\triangleq \mathbb{E}_{\bS^n}\left(\frac{1}{n}\sum_{i=1}^n d(S_i,\hat{S}_i) \right).\nonumber
\end{align}
A distortion-rate pair $(D,R)$ is said to be achievable if there exists a sequence of $(2^{nR},n)$ codes with $\limsup_{n\rightarrow \infty}\mathbb{E}_{\bS^n}\left(d(\bS^n,\mathbf{\hat{S}}^n)\right)\leq D$. The distortion-rate function $D_{P_S}(R)$ of the source $P_S$ is the infimum of distortions $D$ such that $(D,R)$ is achievable, and is given by~\cite{berger71,bg98}
\begin{align}
D_{P_S}(R)\triangleq \min_{P_{\hat{S}|S}:I(S;\hat{S})\leq R}\sum_{s\in\m{S},\hat{s}\in\m{\hat{S}}}P_S(s)P_{\hat{S}|S}(\hat{s}|s) d(s,\hat{s}).\label{dr}
\end{align}
For a sequence of $(2^{nR},n)$ codes and $0<R'<R$, we say that almost every subset with cardinality $2^{nR'}$ satisfies a certain property if the fraction of subsets with cardinality $2^{nR'}$ that do not satisfy the property vanishes with $n$.

In~\cite{ziv72}, Ziv proved that there exists a codebook with rate $R$ that asymptotically achieves the distortion-rate function $D(R)$, regardless of the underlying source distribution. His result holds for any stationary source and even for a certain class of nonstationary sources. Our main result, stated below, only deals with i.i.d. sources with unknown distribution, and is hence less general than~\cite{ziv72} in terms of the assumptions made on the source's distribution. However, it extends~\cite{ziv72} in the sense that the distortion attained by subsets, and not just the full codebook, is shown to be universally optimal.

\begin{theorem}
Let $S$ be a DMS over the alphabet $\m{S}$, $\mathcal{\hat{S}}$ a reconstruction alphabet, and $d:\m{S}\times \m{\hat{S}}\mapsto\RR^+$ a bounded distortion measure. For any $R>0$, $\delta>0$, there exists a sequence of $(2^{nR},n)$ codebooks $\mathcal{C}\subset\mathcal{\hat{S}}^n$ with the property that for any source pmf $P_S$ on $\m{S}$ and any $0<R'<R$, almost every subset of $2^{nR'}$ codewords from $\mathcal{C}$ achieves an expected distortion no greater than $D_{P_S}(R'-\delta)$.
\label{thm:universal}
\end{theorem}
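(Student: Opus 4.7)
My plan is a random coding argument that builds a single codebook $\m{C}$ working simultaneously for every source $P_S$ and every $R'<R$. The main difficulty is universality over $P_S$: codewords drawn i.i.d.\ from a fixed reconstruction distribution cannot suffice, because the optimal output marginal depends on $P_S$. To get around this I will stratify the codebook by reconstruction type, so that every reconstruction $n$-type receives a sub-exponential fraction of the codewords.

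\textbf{Construction and codebook concentration.} Enumerate the set $\m{P}_n$ of $n$-types on $\m{\hat{S}}$ (its cardinality is polynomial in $n$). For each $Q\in\m{P}_n$, draw $L_n=\lceil 2^{nR}/|\m{P}_n|\rceil$ codewords i.i.d.\ uniformly from the type class $T_Q^n$, and let $\m{C}$ be the union of these sub-codebooks. For any $\bs^n\in\m{S}^n$ and any joint type $P_{S\hat{S}}$ whose $S$-marginal equals the type of $\bs^n$, elementary type counting shows that the expected number of codewords of $\hat{S}$-marginal $P_{\hat{S}}$ that are of joint type $P_{S\hat{S}}$ with $\bs^n$ is $2^{n(R-I_{P_{S\hat{S}}}(S;\hat{S}))-o(n)}$. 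A Chernoff bound shows this count is within a factor $1/2$ of its mean except with doubly-exponentially small probability whenever $I<R$. Union-bounding over at most $|\m{S}|^n$ sequences $\bs^n$ and the polynomially many joint types, with probability tending to one the realized codebook has, for every $\bs^n$ and every joint type with $I\leq R$, jointly-typical count at least $2^{n(R-I)-o(n)}$. Fix such a codebook.

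\textbf{Subset argument.} Given $P_S$, $R'<R$, and $\delta>0$, pick $P^*_{\hat{S}|S}$ achieving $D_{P_S}(R'-\delta)$, so that $I(S;\hat{S})\leq R'-\delta$, and approximate $P_S P^*_{\hat{S}|S}$ by a nearby joint type. For any $\bs^n$ of type close to $P_S$, the codebook contains at least $A\geq 2^{n(R-R'+\delta)-o(n)}$ codewords jointly typical with $\bs^n$ under this joint type. A uniformly random $2^{nR'}$-subset $\m{C}'\subseteq\m{C}$ avoids all $A$ of them with probability at most $(1-A/2^{nR})^{2^{nR'}}\leq \exp(-2^{n(\delta-o(1))})$. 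Union-bounding over at most $|\m{S}|^n$ source sequences, the fraction of bad $2^{nR'}$-subsets is doubly exponentially small. On any good subset, nearest-neighbor encoding of every typical $\bs^n$ yields per-letter distortion at most $D_{P_S}(R'-\delta)+o(1)$, so the expected distortion under $\bS^n\sim P_S^{\otimes n}$ is at most $D_{P_S}(R'-\delta)+o(1)$; the final $o(1)$ slack is absorbed by replacing $\delta$ by $\delta/2$ in the construction and invoking continuity of $D_{P_S}(\cdot)$. Since the codebook construction did not depend on $P_S$, this bound holds simultaneously for every source, and a further discretization of $R'\in(0,R)$ at resolution $\delta$ (finitely many choices) yields the universal statement.

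\textbf{Main obstacle.} The delicate step is the doubly-exponential concentration in the codebook construction: it must absorb a union bound over all $\bs^n\in\m{S}^n$ and all joint types, so that a further union bound over $\bs^n$ inside the subset argument still leaves a vanishing fraction of bad subsets. The type-stratified construction is chosen precisely so that the subset-count exponent $R'-I(S;\hat{S})$ is strictly positive whenever $I<R'$, which drives the needed doubly-exponential decay and simultaneously delivers universality over $P_S$.
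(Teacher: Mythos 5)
Your proposal is correct in its essentials but takes a genuinely different route from the paper. The paper draws codewords i.i.d.\ from a continuous \emph{mixture} $Q(\bz^n)=\int w(\theta)\prod_i P_\theta(z_i)\,d\theta$ over the reconstruction simplex with $w$ uniform; the key tool (Lemma~\ref{lem:MIuniversalbound}) is that this mixture automatically assigns probability at least $2^{-n(I(Y;Z)+\delta(\varepsilon))}$ to any conditional typical set, because the uniform prior places enough mass near the relevant output marginal, whatever it is. The paper then bounds the error probability averaged over both the random codebook and a uniformly random subset, and extracts a good codebook by two applications of Markov's inequality combined with a quantization of both the rate $R'$ and the source simplex $\m{P}^{|\m{S}|}$. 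Your construction instead stratifies the codebook by reconstruction $n$-type, drawing equal numbers of codewords uniformly from each type class, and derandomizes directly: a Chernoff bound plus a union bound over all $|\m{S}|^n$ source sequences and polynomially many joint types exhibits a single deterministic codebook with the desired jointly-typical counts, after which the subset step is a hypergeometric-avoidance computation. What your route buys: it dispenses with the mixture machinery and the simplex-partitioning argument entirely (the union bound over $\bs^n$ subsumes universality over $P_S$), and it yields a concrete ``good'' codebook rather than one extracted by Markov. What the paper's route buys: Lemma~\ref{lem:MIuniversalbound} is a clean one-shot covering bound valid for any target pmf, letting the error analysis follow the usual random-coding covering pattern without per-sequence concentration.

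Two technical points you should tighten. First, your Chernoff/union-bound step gives doubly exponential concentration only when the mutual information $I$ of the joint type is bounded away from $R$ by a fixed margin $\gamma>0$; as $I\to R$ the mean count $2^{n(R-I)-o(n)}$ no longer grows exponentially, the tail bound collapses, and the union bound over $|\m{S}|^n$ sequences fails. You should restrict the codebook guarantee to joint types with $I\le R-\gamma$ (your subset argument only ever invokes $I\le R'-\delta<R$, so this costs nothing). Second, when you approximate $P_SP^*_{\hat S|S}$ by a joint $n$-type, the approximating type's $S$-marginal must equal the \emph{exact} empirical type of $\bs^n$ rather than $P_S$ itself, since your codebook guarantee is per-$\bs^n$ and per-joint-type; density of types and continuity of mutual information and of $D_{P_S}(\cdot)$ make this cost only $o(1)$, but the step should be stated explicitly.
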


\begin{remark}
Ziv's result~\cite{ziv72} is based on first splitting the source sequence into $\ell$ blocks of length $k$ ($\ell\gg k$). Then, the best $(2^{kR},k)$ lossy source code, with respect to the average distortion among the $\ell$ blocks, is found, and conveyed to the receiver. Afterwards, each block is encoded using this source code and its index is sent to the receiver. If the number of blocks of length $k$ is large enough, the overhead for transmitting the codebook becomes negligible, which allows the scheme to perform well for a large class of distributions. This construction, however, is not subset universal. To see this, note that the empirical distribution of most codewords in the obtained codebook is that of the largest type. Thus, a random subset of $2^{nR'}$ codewords will contain a negligible fraction of codewords from other types, and in general cannot attain the distortion-rate function unless the optimal reconstruction distribution at rate $R'$ is equal to that of the largest type as well.
\end{remark}

\vspace{1mm}

The proof of Theorem~\ref{thm:universal} relies on independently drawing the codewords of $\m{C}$ from a mixture distribution, and will be given in Section~\ref{sec:proof}. In the next section we lay the ground by proving certain properties of mixture distributions.

\section{Properties of Mixture Distributions}
\label{subsec:mixture}

We follow the notation of~\cite{elgamalkim}, and define the empirical pmf of an $n$-dimensional sequence $\by^n$ with elements from $\m{Y}$ as
\begin{align}
\pi(y|\by^n)\triangleq \tfrac{1}{n}\left|i \ : \ y_i=y \right| \ \text{for } y\in\m{Y}.\nonumber
\end{align}
Similarly, the empirical pmf of a pair of $n$-dimensional sequences $(\by^n,\bz^n)$ with elements from $\m{Y}\times\m{Z}$ is defined as
\begin{align}
\pi(y,z|\by^n,\bz^n)\triangleq \tfrac{1}{n}\left|i \hspace{0.2mm} : \hspace{0.2mm} (y_i,z_i)=(y,z) \right| \ \text{for } (y,z)\in\m{Y}\times\m{Z}.\nonumber
\end{align}
Let $P_{YZ}=P_Y P_{Z|Y}$ be a joint pmf on $\m{Y}\times{Z}$. The set of $\varepsilon$-typical $n$-dimensional sequences w.r.t. $P_Y$ is defined as
\begin{align}
\m{T}_{\varepsilon}^{(n)}(P_Y)\triangleq\left\{\by^n \ : \ |\pi(y|\by^n)-P_Y(y)|\leq \varepsilon P_Y(y) \ \forall y\in\m{Y} \right\},\nonumber
\end{align}
and the set of jointly $\varepsilon$-typical $n$-dimensional sequences w.r.t. $P_{YZ}$ is defined as
\begin{align}
&\m{T}_{\varepsilon}^{(n)}(P_{YZ})\triangleq\bigg\{(\by^n,\bz^n) \ : \nonumber\\ &\ |\pi(y,z|\by^n,\bz^n)-P_{YZ}(y,z)|\leq \varepsilon P_{YZ}(y,z) \ \forall (y,z)\in\m{Y}\times\m{Z} \bigg\}.\nonumber
\end{align}
We also define the set of conditionally $\varepsilon$-typical $n$-dimensional sequences w.r.t. $P_{YZ}$ as
\begin{align}
\m{T}_{\varepsilon}^{(n)}(P_{YZ}|\by^n)\triangleq\left\{\bz^n \ : \ (\by^n,\bz^n)\in\m{T}_{\varepsilon}^{(n)}(P_{YZ})  \right\}.\nonumber
\end{align}
The next statement follows from the definitions above~\cite{elgamalkim}.

\begin{proposition}
Let $\by^n\in\m{Y}^n$. For every $\bz^n\in\m{T}_{\varepsilon}^{(n)}(P_{YZ}|\by^n)$ we have $\bz^n\in\m{T}_{\varepsilon}^{(n)}(P_Z)$. If in addition, $\by^n\in\m{T}_{\varepsilon'}^{(n)}(P_Y)$, for some $\varepsilon'<\varepsilon$, then for $n$ large enough
\begin{align}
|\m{T}_{\varepsilon}^{(n)}(P_{YZ}|\by^n)|\geq (1-\varepsilon)2^{n(1-\varepsilon)H(Z|Y)}.\nonumber
\end{align}
\label{prop:typicality}
\end{proposition}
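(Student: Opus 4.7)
The proposition has two separate claims; I would handle them independently.

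For the first claim (marginal of jointly typical is typical), I would just marginalize the empirical pmf and invoke the triangle inequality. Namely, $\pi(z|\bz^n)=\sum_{y\in\m{Y}}\pi(y,z|\by^n,\bz^n)$ and $P_Z(z)=\sum_{y\in\m{Y}}P_{YZ}(y,z)$, so summing the joint typicality inequalities over $y$ gives $|\pi(z|\bz^n)-P_Z(z)|\le \varepsilon\sum_y P_{YZ}(y,z)=\varepsilon P_Z(z)$. This is the whole argument.

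For the cardinality bound I would use the standard probabilistic method. Fix $\by^n\in\m{T}_{\varepsilon'}^{(n)}(P_Y)$ and draw $\bZ^n$ conditionally independently with $Z_i\sim P_{Z|Y}(\cdot|y_i)$. For any $\bz^n$, $P(\bZ^n=\bz^n\mid\by^n)=\prod_i P_{Z|Y}(z_i|y_i)=\prod_{(y,z)}P_{Z|Y}(z|y)^{n\pi(y,z|\by^n,\bz^n)}$, so if $\bz^n\in\m{T}_{\varepsilon}^{(n)}(P_{YZ}|\by^n)$, then using $\pi(y,z|\by^n,\bz^n)\ge (1-\varepsilon)P_{YZ}(y,z)$ yields
\begin{equation}
P(\bZ^n=\bz^n\mid\by^n)\le 2^{-n(1-\varepsilon)H(Z|Y)}. \nonumber
\end{equation}
Hence $|\m{T}_{\varepsilon}^{(n)}(P_{YZ}|\by^n)|\ge 2^{n(1-\varepsilon)H(Z|Y)}\cdot P\bigl(\bZ^n\in\m{T}_{\varepsilon}^{(n)}(P_{YZ}|\by^n)\mid\by^n\bigr)$, and it remains to show that this last probability is at least $1-\varepsilon$ for $n$ large enough.

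To bound the probability of joint typicality, I would fix $(y,z)$ and write $n\pi(y,z|\by^n,\bZ^n)=\sum_{i:y_i=y}\mathbf 1[Z_i=z]$, a sum of independent Bernoullis, whose mean is $\pi(y|\by^n)P_{Z|Y}(z|y)$ and which differs from $P_{YZ}(y,z)$ by at most $\varepsilon' P_{YZ}(y,z)$ by typicality of $\by^n$. The variance is at most $1/n$, so by Chebyshev the empirical joint pmf deviates from its mean by more than $(\varepsilon-\varepsilon')P_{YZ}(y,z)$ only with probability $O(1/n)$ for each $(y,z)$. A union bound over the finite alphabet $\m{Y}\times\m{Z}$ and the triangle inequality then give $P(\bZ^n\in\m{T}_{\varepsilon}^{(n)}(P_{YZ}|\by^n)\mid\by^n)\ge 1-\varepsilon$ for $n$ sufficiently large, completing the proof. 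The only delicate point is the $\varepsilon$ vs.\ $\varepsilon'$ bookkeeping: the slack $\varepsilon-\varepsilon'>0$ is precisely what allows the Chebyshev deviation to be absorbed after accounting for the bias of the empirical mean; this is the reason the hypothesis $\varepsilon'<\varepsilon$ appears in the statement.
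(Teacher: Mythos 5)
Your proof is correct. The paper does not prove this proposition at all — it simply states that it "follows from the definitions above" and cites El Gamal and Kim — and your argument is precisely the standard conditional typicality lemma from that reference: marginalize the empirical pmf for the first claim, and for the cardinality bound use the probabilistic method with $\bZ^n\sim\prod_i P_{Z|Y}(\cdot\mid y_i)$, bounding each conditionally typical sequence's probability above by $2^{-n(1-\varepsilon)H(Z|Y)}$ and showing via Chebyshev plus a union bound that the typical set has conditional probability at least $1-\varepsilon$. The $\varepsilon'<\varepsilon$ bookkeeping you flag is exactly the right place to be careful, and you handle it correctly.
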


Let $\simplex$ denote the simplex containing all probability mass functions on $\m{Z}$. For every $\theta\in\simplex$, let $P_\theta(z)$ be the corresponding pmf evaluated at $z$. Let $w(\theta)$ be some probability density function on $\simplex$. We may now define the \emph{mixture} distribution $Q$ as~\cite{mf98}
\begin{align}
Q(\bz^n)=\int_{\theta\in\simplex}w(\theta)\prod_{i=1}^n P_{\theta}(z_i) d\theta.\label{mixpmf}
\end{align}
The following propositions are proved in the appendix.

\begin{proposition}
Let $\bZ^n$ be a random $n$-dimensional sequence drawn according to $Q(\bz^n)$ defined in~\eqref{mixpmf}. Let $P_{YZ}$ be some pmf on $\m{Y}\times\m{Z}$, and let $\by^n\in\m{T}_{\varepsilon'}^{(n)}(P_Y)$, for some $\varepsilon'<\varepsilon$. For $n$ large enough, we have
\begin{align}
&\Pr\bigg(\bZ^n\in\m{T}_{\varepsilon}^{(n)}(P_{YZ}|\by^n)\bigg)\nonumber\\
&\geq(1-\varepsilon)\int_{\theta\in\simplex}\hspace{-6mm}w(\theta) 2^{-n(1+\varepsilon)\left(I(Y;Z)+D(P_{Z}||P_\theta)+2\varepsilon H(Z|Y)\right)}d\theta.\nonumber
\end{align}
\label{prop:mixturetypicality}
\end{proposition}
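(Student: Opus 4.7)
The plan is to swap the mixture integral outside the probability and reduce to a pointwise-in-$\theta$ estimate. By $Q(\bz^n)=\int w(\theta)\prod_i P_\theta(z_i)\,d\theta$ and linearity, $\Pr(\bZ^n\in\m{T}_\varepsilon^{(n)}(P_{YZ}|\by^n))$ equals $\int_{\theta\in\simplex} w(\theta) \sum_{\bz^n\in\m{T}_\varepsilon^{(n)}(P_{YZ}|\by^n)} \prod_{i=1}^n P_\theta(z_i) \, d\theta$, so it suffices to lower bound the inner sum, for each $\theta$, by $(1-\varepsilon)2^{-n(1+\varepsilon)(I(Y;Z)+D(P_Z||P_\theta)+2\varepsilon H(Z|Y))}$. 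The case $D(P_Z||P_\theta)=\infty$ renders the target trivially zero, so I may restrict to $\theta$ with $P_Z\ll P_\theta$.

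For each such $\theta$, Proposition~1 (applicable since $\by^n\in\m{T}_{\varepsilon'}^{(n)}(P_Y)$ with $\varepsilon'<\varepsilon$) supplies two facts: the conditional typical set has cardinality at least $(1-\varepsilon)2^{n(1-\varepsilon)H(Z|Y)}$, and every $\bz^n$ inside it also lies in $\m{T}_\varepsilon^{(n)}(P_Z)$, so $\pi(z|\bz^n)\le(1+\varepsilon)P_Z(z)$ for every $z$. For such $\bz^n$, $\prod_{i=1}^n P_\theta(z_i) = 2^{n\sum_z \pi(z|\bz^n)\log P_\theta(z)}$; since $\log P_\theta(z)\le 0$, the empirical-frequency upper bound flips into a lower bound on the exponent, giving $\sum_z \pi(z|\bz^n)\log P_\theta(z)\ge (1+\varepsilon)\sum_z P_Z(z)\log P_\theta(z)=-(1+\varepsilon)(H(Z)+D(P_Z||P_\theta))$. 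Multiplying cardinality by this pointwise lower bound yields an inner sum of at least $(1-\varepsilon) 2^{n[(1-\varepsilon)H(Z|Y)-(1+\varepsilon)(H(Z)+D(P_Z||P_\theta))]}$.

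It then remains to massage the exponent. Substituting $H(Z)=H(Z|Y)+I(Y;Z)$ rewrites it as $-n(1+\varepsilon)I(Y;Z)-n(1+\varepsilon)D(P_Z||P_\theta)-2n\varepsilon H(Z|Y)$, and the elementary inequality $2\varepsilon H(Z|Y)\le 2\varepsilon(1+\varepsilon)H(Z|Y)$ bounds this below by $-n(1+\varepsilon)(I(Y;Z)+D(P_Z||P_\theta)+2\varepsilon H(Z|Y))$, matching the claim; reintegrating against $w(\theta)$ finishes the proof. The only real care point is sign bookkeeping: because $\log P_\theta(z)\le 0$, the $(1+\varepsilon)$ upper bound on empirical frequencies becomes a lower bound on $\prod_i P_\theta(z_i)$ (the direction we need), and the explicit $2\varepsilon H(Z|Y)$ cushion in the target exponent is precisely the slack required to absorb the $(1+\varepsilon)$ vs.\ $(1-\varepsilon)$ mismatch between the pointwise and cardinality bounds. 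Aside from that, the argument is mechanical.
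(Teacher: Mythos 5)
Your proof is correct and follows essentially the same route as the paper's: exchange the sum over the conditional typical set with the mixture integral, lower-bound each $\prod_i P_\theta(z_i)$ using $\pi(z|\bz^n)\le(1+\varepsilon)P_Z(z)$ together with the sign of $\log P_\theta$, multiply by the cardinality bound from Proposition~1, and then massage the exponent via $\sum_z P_Z(z)\log P_\theta(z)=-H(Z)-D(P_Z\|P_\theta)$ and $H(Z)=H(Z|Y)+I(Y;Z)$, absorbing the $-2\varepsilon H(Z|Y)$ term into the $(1+\varepsilon)$ factor. The only cosmetic difference is that you make the sum/integral exchange and the pointwise-in-$\theta$ reduction explicit, and you explicitly dispose of the $D(P_Z\|P_\theta)=\infty$ case, both of which the paper leaves implicit.
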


\begin{proposition}
Let $P_Z$ be some distribution in $\simplex$. For any $0<\xi<1/|\m{Z}|^2$, there exists a subset $\m{V}\subset\simplex$ with Lebesgue measure $\xi^{|\m{Z}|-1}$ on $\simplex$, such that for all $P_{\theta}\in\m{V}$
\begin{align}
D(P_Z||P_\theta)\leq\log\frac{1}{1-\xi|\m{Z}|^2},\nonumber
\end{align}
\label{prop:dneighnorhood}
\end{proposition}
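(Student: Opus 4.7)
The plan is to exhibit $\mathcal{V}$ explicitly as a ``componentwise neighborhood'' of $P_Z$ from below, namely
\[
\mathcal{V} \;\triangleq\; \Bigl\{P_\theta\in\simplex \;:\; P_\theta(z) \geq \bigl(1-\xi|\m{Z}|^2\bigr)P_Z(z) \text{ for all } z\in\m{Z}\Bigr\}.
\]
The assumption $\xi<1/|\m{Z}|^2$ ensures that the lower bound is strictly positive (so $P_\theta(z)>0$ wherever $P_Z(z)>0$) and that the log below is well defined.

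The divergence bound is immediate: for any $P_\theta\in\mathcal{V}$ and any $z$ in the support of $P_Z$, we have
\[
\log\frac{P_Z(z)}{P_\theta(z)} \;\leq\; \log\frac{1}{1-\xi|\m{Z}|^2},
\]
and taking the convex combination with weights $P_Z(z)$ gives $D(P_Z\|P_\theta)\leq \log\frac{1}{1-\xi|\m{Z}|^2}$.

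For the measure computation, I would parameterize $\simplex$ by its first $|\m{Z}|-1$ coordinates $(\theta_1,\ldots,\theta_{|\m{Z}|-1})$ in $\RR^{|\m{Z}|-1}$, with $\theta_{|\m{Z}|} = 1-\sum_{i<|\m{Z}|}\theta_i$. The defining inequalities of $\mathcal{V}$ become $\theta_i\geq (1-\xi|\m{Z}|^2)P_Z(z_i)$ for $i=1,\dots,|\m{Z}|-1$ together with $\sum_{i<|\m{Z}|}\theta_i\leq 1-(1-\xi|\m{Z}|^2)P_Z(z_{|\m{Z}|})$. The affine change of variables $a_i\triangleq \theta_i-(1-\xi|\m{Z}|^2)P_Z(z_i)$ has unit Jacobian and transforms $\mathcal{V}$ into the scaled simplex
\[
\Bigl\{(a_1,\ldots,a_{|\m{Z}|-1}) \;:\; a_i\geq 0,\; \sum_{i=1}^{|\m{Z}|-1} a_i \leq \xi|\m{Z}|^2\Bigr\},
\]
whose Lebesgue measure equals $(\xi|\m{Z}|^2)^{|\m{Z}|-1}/(|\m{Z}|-1)!$.

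To conclude, I would check that $|\m{Z}|^{2(|\m{Z}|-1)}\geq (|\m{Z}|-1)!$ for every $|\m{Z}|\geq 1$ (trivially by induction, since $n^2\geq n$), so the measure of $\mathcal{V}$ is at least $\xi^{|\m{Z}|-1}$, and one may trim $\mathcal{V}$ if equality is desired. There is no real obstacle here; the only small care points are ensuring $\xi<1/|\m{Z}|^2$ to keep the bound meaningful and remembering to absorb the combinatorial factor $|\m{Z}|^{2(|\m{Z}|-1)}/(|\m{Z}|-1)!$, which provides a large slack.
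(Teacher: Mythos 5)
Your proof is correct, and it takes a genuinely different route from the paper's. You shrink $P_Z$ multiplicatively by a factor $1-\xi|\m{Z}|^2$ and allow each coordinate to grow from that floor, producing a scaled-simplex neighborhood; the paper instead sorts the coordinates, perturbs all but the largest upward by amounts in $[0,\xi)$, and compensates on the largest coordinate, producing a cube-shaped neighborhood. The trade-off is visible in the two halves of the argument. For the divergence bound, your construction is cleaner: since every coordinate of $P_\theta$ is at least $(1-\xi|\m{Z}|^2)P_Z(z)$, each term $\log\frac{P_Z(z)}{P_\theta(z)}$ is uniformly bounded by $\log\frac{1}{1-\xi|\m{Z}|^2}$ and the bound follows by convex combination; the paper must instead observe that all but one log-ratio is nonpositive, isolate the largest coordinate, and use $1/|\m{Z}|\le P_Z(|\m{Z}|)\le 1$ together with the monotonicity of $x/(x-c)$. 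For the volume, the paper's cube gives exactly $\xi^{|\m{Z}|-1}$ with no computation, while you get $(\xi|\m{Z}|^2)^{|\m{Z}|-1}/(|\m{Z}|-1)!$ and must verify the slack inequality $|\m{Z}|^{2(|\m{Z}|-1)}\ge(|\m{Z}|-1)!$ (which indeed holds, e.g. via $(|\m{Z}|-1)!\le|\m{Z}|^{|\m{Z}|-1}$). Both constructions sit inside $\simplex$ precisely because $\xi<1/|\m{Z}|^2$, and both prove the proposition.
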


Combining Proposition~\ref{prop:mixturetypicality} and~\ref{prop:dneighnorhood} yields the following lemma.

\begin{lemma}
Let $Q(\bz^n)$ be as defined in~\eqref{mixpmf}, with $w(\theta)$ taken as the uniform distribution on $\simplex$, and let $\bZ^n$ be a random $n$-dimensional sequence drawn according to $Q(\bz^n)$. Let $P_{YZ}$ be some pmf on $\m{Y}\times\m{Z}$, and let $\by^n\in\m{T}_{\varepsilon'}^{(n)}(P_Y)$, for some $\varepsilon'<\varepsilon$. For $n$ large enough, we have
\begin{align}
&\Pr\bigg(\bZ^n\in\m{T}_{\varepsilon}^{(n)}(P_{YZ}|\by^n)\bigg)\geq 2^{-n\left(I(Y;Z)+\delta(\varepsilon)\right)},\nonumber
\end{align}
where $\delta(\epsilon)\rightarrow 0$ for $\epsilon\rightarrow 0$.
\label{lem:MIuniversalbound}
\end{lemma}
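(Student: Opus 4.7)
The plan is to plug Propositions~\ref{prop:mixturetypicality} and~\ref{prop:dneighnorhood} together: the former lower-bounds the desired probability by an integral over $\simplex$, and the latter guarantees that this integral stays large enough when $w(\theta)$ is the uniform density.

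First I would fix $\xi>0$ small enough that $\log\frac{1}{1-\xi|\m{Z}|^2}\leq \varepsilon$ (any $\xi\leq (1-2^{-\varepsilon})/|\m{Z}|^2$ will do). Applying Proposition~\ref{prop:dneighnorhood} to the $Z$-marginal $P_Z$ of $P_{YZ}$ then furnishes a subset $\m{V}\subset\simplex$ of Lebesgue measure $\xi^{|\m{Z}|-1}$ on which $D(P_Z||P_\theta)\leq \varepsilon$. Since $w(\theta)=1/\Vol(\simplex)=(|\m{Z}|-1)!$ is constant, restricting the integral in Proposition~\ref{prop:mixturetypicality} to $\m{V}$ yields
\begin{align}
\Pr\bigl(\bZ^n\in\m{T}_{\varepsilon}^{(n)}(P_{YZ}|\by^n)\bigr)\geq (1-\varepsilon)(|\m{Z}|-1)!\,\xi^{|\m{Z}|-1}\cdot 2^{-n(1+\varepsilon)(I(Y;Z)+\varepsilon+2\varepsilon H(Z|Y))}.\nonumber
\end{align}

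Finally, I would absorb the constant prefactor into the exponent. Because $\xi$ depends only on $\varepsilon$ and not on $n$, the factor $(1-\varepsilon)(|\m{Z}|-1)!\,\xi^{|\m{Z}|-1}$ is a strictly positive constant, and therefore its per-symbol contribution $\tfrac{1}{n}\log\bigl[(1-\varepsilon)(|\m{Z}|-1)!\,\xi^{|\m{Z}|-1}\bigr]$ is $o(1)$ and can be made smaller in magnitude than $\varepsilon$ for $n$ large enough. Using that both $I(Y;Z)$ and $H(Z|Y)$ are bounded by $\log|\m{Z}|$, the exponent takes the form $n(I(Y;Z)+\delta(\varepsilon))$ with $\delta(\varepsilon)\to 0$ as $\varepsilon\to 0$, matching the claim.

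The only non-routine point is recognizing that the measure $\xi^{|\m{Z}|-1}$ of $\m{V}$ is positive and independent of $n$; once this is in hand the rest is algebra. Put differently, the uniform mixture prior places a non-vanishing mass on a neighborhood of every target distribution $P_Z$, so the exponential penalty $D(P_Z||P_\theta)$ inside the integrand costs only an $n$-independent multiplicative factor, which disappears in the per-symbol rate.
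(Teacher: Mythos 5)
Your proposal is correct and follows the paper's proof essentially verbatim: fix $\xi$ (you even recover the paper's choice $\xi=(1-2^{-\varepsilon})/|\m{Z}|^2$), invoke Proposition~\ref{prop:dneighnorhood} to get a neighborhood $\m{V}$ of $P_Z$ with $n$-independent measure on which $D(P_Z\|P_\theta)\leq\varepsilon$, restrict the integral from Proposition~\ref{prop:mixturetypicality} to $\m{V}$, and fold the resulting constant prefactor into a vanishing per-symbol term in the exponent. The only cosmetic difference is that you write the uniform density as $(|\m{Z}|-1)!$ where the paper writes $c_{|\m{Z}|}^{-1}$; both are just the $n$-independent normalization and play no essential role.
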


\begin{proof}
Let $c_{|\m{Z}|}$ be the Lebesgue measure of the simplex $\simplex$. Clearly, we have that $w(\theta) = c_{|\m{Z}|}^{-1}$ for $\theta\in\simplex$ and $w(\theta)=0$ otherwise.
Setting $\xi=(1-2^{-\varepsilon})/|\m{Z}|^2$ in Proposition~\ref{prop:dneighnorhood} implies that there exists a set $\m{V}\subset\simplex$ with volume $((1-2^{-\varepsilon})/|\m{Z}|^2)^{|\m{Z}|-1}$ such that $D(P_Z||P_\theta)<\varepsilon$ for any $P_{\theta}\in\m{V}$. Combining this with Proposition~\ref{prop:mixturetypicality} gives
\begin{align}
&\Pr\bigg(\bZ^n\in\m{T}_{\varepsilon}^{(n)}(P_{YZ}|\by^n)\bigg)\nonumber\\
&\geq(1-\varepsilon)\int_{\theta\in\m{V}}w(\theta) 2^{-n(1+\varepsilon)\left(I(Y;Z)+\varepsilon+2\varepsilon H(Z|Y)\right)}d\theta\nonumber\\
&>\frac{1-\varepsilon}{c_{|\m{Z}|}}\left(\frac{1-2^{-\varepsilon}}{|\m{Z}|^2}\right)^{|\m{Z}|-1}2^{-n(1+\varepsilon)\left(I(Y;Z)+\varepsilon+2\varepsilon H(Z|Y)\right)}\nonumber\\
&=2^{-n\left(I(Y;Z)+\delta(\epsilon)\right)},\nonumber
\end{align}
where $\delta(\epsilon)\rightarrow 0$ for $\epsilon\rightarrow 0$.
\end{proof}

\section{Proof of Theorem~\ref{thm:universal}}
\label{sec:proof}

\underline{\textit{Random codebook generation:}} Let
\begin{align}
Q(\bhs)=\int_{\theta\in\m{P}^{\m{|\hat{S}|}}} w(\theta)\prod_{i=1}^n P_{\theta}(\hat{s}_i)d\theta,\nonumber
\end{align}
where $\m{P}^{|\m{\hat{S}|}}$ is the simplex containing all probability mass functions on $\m{\hat{S}}$  and $w(\theta)$ is the uniform distribution on $\m{P}^{|\m{\hat{S}|}}$. Randomly and independently generate $2^{nR}$ sequences $\bhs(m), \ m\in\{1,2,\ldots,2^{nR}\}$, each according to $Q(\bhs)$. These sequence constitutes the full codebook $\m{C}$.

A subset of the codebook, indexed by $(\m{I},R')$, consists of an index set $\m{I}\subset\{1,2,\ldots,2^{nR}\}$ with cardinality $|\m{I}|=2^{nR'}$, $0<R'<R$, and the corresponding sequences $\bhs(m), \ m\in\m{I}$. An arbitrary $(\m{I},R')$ subset of $\m{C}$ is revealed to the encoder and the decoder.

\underline{\textit{Encoding:}} Given the source sequence $\bs^n$ and a subset $(\m{I},R')$ of $\m{C}$, the optimal encoder sends the index of the codeword that achieves the minimal distortion, i.e., it sends
\begin{align}
m^*=\argmin_{m\in\m{I}}\frac{1}{n}\sum_{i=1}^n d\left(s_i,(\bhs(m))_i\right).\label{eq:optenc}
\end{align}
Note that the optimal encoder~\eqref{eq:optenc} is universal, i.e., it does not have to know the true underlying source pmf, and that such knowledge can in no way improve its performance.

\underline{\textit{Decoding:}} Upon receiving the index $m$, the decoder simply sets the reconstruction sequence as $\bhs(m)$.

\underline{\textit{Analysis:}} Let us define a quantized grid of rates in $[0,R)$ whose resolution is $\Delta>0$
\begin{align}
\m{R}_{\Delta}\triangleq\left\{R_j \ : \ R_j=j\cdot \Delta, \ j=1,\ldots,\lfloor R/\Delta\rfloor  \right\}.\nonumber
\end{align}
We further define a set of quantized distributions as follows. For $0<\Delta<1$ and $0<p_{\text{min}}<1$ define the set of points
\begin{align}
\m{G}_{\Delta,p_{\text{min}},|\m{S}|}\triangleq p_{\text{min}}\cdot\Bigg\{&1,\left(1+\frac{\Delta}{|\m{S}|}\right),\left(1+\frac{\Delta}{|\m{S}|}\right)^2,\nonumber\\
& \ \ \ \ldots,\left(1+\frac{\Delta}{|\m{S}|}\right)^{\left\lfloor\frac{-\log{p_{\text{min}}}}{\log\left(1+\frac{\Delta}{|\m{S}|}\right)} \right\rfloor} \Bigg\}.\nonumber
\end{align}
Our set of quantized distributions is denoted by $\m{P}^{|\m{S}|}_{\Delta,p_{\text{min}}}$ and consists of all vectors in the simplex $\m{P}^{|\m{S}|}$ with at least $|\m{S}|-1$ components that belong to $\m{G}_{\Delta,p_{\text{min}},|\m{S}|}$. Clearly, the cardinality $|\m{P}^{|\m{S}|}_{\Delta,p_{\text{min}}}|$ of this set is a function of only $|\m{S}|$, $p_{\text{min}}$ and $\Delta$. Moreover, for any pmf $P_\theta\in\m{P}^{|\m{S}|}$ with the property that $\min_{s\in\m{S}}P_\theta(s)>(1-\Delta/|\m{S}|)p_{\text{min}}$, there exists a pmf $P_{\theta'}\in\m{P}^{|\m{S}|}_{\Delta,p_{\text{min}}}$ such that $|P_\theta(s)-P_{\theta'}(s)|\leq\Delta P_{\theta'}(s)$ for all $s\in\m{S}$. To see this, construct $P_{\theta'}$ by quantizing all but the greatest entry of $P_{\theta}$ to the nearest point in $\m{G}_{\Delta,p_{\text{min}},|\m{S}|}$ and set the remaining entry such that $\sum_{s\in\m{S}}P_{\theta'}(s)=1$.

Let $\m{P}^{|\m{S}|}_{p_{\text{min}}}$ be the set of all probability mass functions on $\m{S}$ whose minimal mass is smaller than $p_{\text{min}}$. We can partition the remaining part of the simplex as
\begin{align}
\m{P}^{|\m{S}|}\setminus\m{P}^{|\m{S}|}_{p_{\text{min}}}=\bigcup_{P_S\in\m{P}^{|\m{S}|}_{\Delta,p_{\text{min}}}}\m{P}(P_s)\nonumber
\end{align}
where the sets $\m{P}(P_s)\subset\m{P}^{|\m{S}|}$ are disjoint and satisfy the property that for any $P_\theta\in\m{P}(P_s)$ we have $|P_\theta(s)-P_S(s)|<\Delta P_S(s)$ for all $s\in\m{S}$.

Let $R'\in\m{R}_{\Delta}$ and $P_S\in\m{P}^{|\m{S}|}_{\Delta,p_{\text{min}}}$. We will next show that almost every codebook in our ensemble satisfies the property that almost every subset of $2^{nR'}$ of its codewords achieves average distortion no greater than $D_{P_S}(R'-\delta)$ with respect to all sources whose pmf belongs to $\m{P}(P_S)$. Since $|\m{R}_{\Delta}|\cdot |\m{P}^{|\m{S}|}_{\Delta,p_{\text{min}}}|$ does not increase with $n$, there must exist a codebook that simultaneously satisfies this property for all $R'\in\m{R}_{\Delta}$ and $P_S\in\m{P}^{|\m{S}|}_{\Delta,p_{\text{min}}}$. The theorem will then follow by taking $\Delta\rightarrow 0$, $p_{\text{min}}\rightarrow 0$, and using the continuity of $D_{P_S}(R)$ w.r.t. $R$ and $P_S$ for a bounded distortion measure $d$~\cite{csiszarkorner}.

Let $R'\in\m{R}_{\Delta}$, $P_S\in\m{P}^{|\m{S}|}_{\Delta,p_{\text{min}}}$ and $P_S'\in\m{P}(P_s)$. We upper bound the average distortion of a given subset  $(\m{I},R')$ of $\m{C}$ over the ensemble of codebooks and a DMS $\bS^n$ with pdf $P_S'$. To that end, rather than analyzing the average distortion attained by the encoder~\eqref{eq:optenc}, we analyze the average distortion attained by the following suboptimal encoder. The suboptimal encoder first solves the minimization problem
\begin{align}
P^{R'}_{\hat{S}|S}=\argmin_{P_{\hat{S}|S}:I(S;\hat{S})\leq R'-\delta}\sum_{s\in\m{S},\hat{s}\in\m{\hat{S}}}P_S(s)P_{\hat{S}|S}(\hat{s}|s) d(s,\hat{s}),\nonumber
\end{align}
and sets $P^{R'}_{S\hat{S}}=P_S P^{R'}_{\hat{S}|S}$ as the target joint pmf. Then, given a source sequence $\bs^n$, it looks for the smallest index $m\in\m{I}$ such that $(\bs^n,\bhs(m))\in\m{T}_{\varepsilon}^{(n)}(P^{R'}_{S \hat{S}})$ and sends it to the decoder. If no such index is found, the smallest $m\in\m{I}$ is sent to the decoder. Note that this suboptimal encoder is not universal, as it requires knowledge of $P_S$. Nevertheless, its average distortion is by definition greater than that of the optimal encoder which is universal, and will therefore indeed serve as an upper bound on the distortion of the universal encoder.

Let $\bS^n$ be a random source drawn i.i.d. according to $P'_S$ and define $\mathbf{\hat{S}}^n$ as the reconstruction obtained by applying the suboptimal encoder (and the decoder described above). Define the error event $\m{E}(\m{I},R')$ as
\begin{align}
\m{E}(\m{I},R')\triangleq\left\{(\bS^n,\bhs(m))\notin\m{T}_{\varepsilon}^{(n)}(P^{R'}_{S \hat{S}}) \ \text{for all } m\in\m{I} \right\}.
\end{align}
By the definition of $P^{R'}_{S \hat{S}}$ and the definition of the typical set, we have that if $\m{E}(\m{I},R')$ did not occur
\begin{align}
d(\bS^n,\mathbf{\hat{S}}^n)&\leq (1+\varepsilon)D_{P_S}(R'-\delta),\label{avgdist}
\end{align}
where $D_{P_S}(\cdot)$ is the distortion-rate function~\eqref{dr} with respect to $P_S$.
%
Further, the error event satisfies $\m{E}(\m{I},R')\subset \m{E}_1\cup\m{E}_2(\m{I},R')$, where
\begin{align}
\m{E}_1&=\left\{\bS^n\notin\m{T}_{\varepsilon'}^{(n)}(P_S) \right\}\nonumber\\
\m{E}_2(\m{I},R')&=\bigg\{\bS^n\in\m{T}_{\varepsilon'}^{(n)}(P_S),\nonumber\\
& \bhs(m)\notin\m{T}_{\varepsilon}^{(n)}(P^{R'}_{S\hat{S}}|\bS^n) \ \text{for all } m\in\m{I} \bigg\},\label{E2j}
\end{align}
and $\Delta<\varepsilon'<\varepsilon$. Note that $\m{E}_1$ is independent of $\m{I}$ and that $\Pr(\m{E}_1)\rightarrow 0$ as $n\rightarrow \infty$ by the (weak) law of large numbers and the fact that $\Delta<\varepsilon'$. The second error event does depend on $\m{I}$. We will now show that its expected probability with respect to the ensemble of codebooks $\mathbb{E}_{\m{C}}\left(\Pr(\m{E}_2(\m{I},R'))\right)$ vanishes for almost all $\m{I}\subset\{1,2,\ldots,2^{nR}\}$ of cardinality $2^{nR'}$.

%

Let $\m{U}$ be a random subset of indices drawn from the uniform distribution on all subsets of $\{1,\ldots,2^{nR}\}$ with cardinality $2^{nR'}$. 
By the random symmetric generation process of the codewords in $\m{C}$, the value of $\mathbb{E}_{\m{C}|\m{U}}\left(\Pr\left(\m{E}_2(\m{U},R')\right)|\m{U}=\m{I}\right)$ depends on the index set $\m{I}$ only through its cardinality $2^{n R'}$. We therefore have
\begin{align}
&\mathbb{E}_{\m{C},\m{U}}\left(\Pr\left(\m{E}_2(\m{U},R')\right)\right)=\mathbb{E}_{\m{U}}\mathbb{E}_{\m{C}|\m{U}}\left(\Pr\left(\m{E}_2(\m{U},R')|\m{U}\right)\right)\nonumber\\
&=\mathbb{E}_{\m{U}}\bigg(\sum_{\bs^n\in\m{T}_{\varepsilon'}^{(n)}(P_S)}P'_{\bS^n}(\bs^n)\nonumber\\
&\Pr\left(\bhS(m)\notin\m{T}_{\varepsilon}^{(n)}(P^{R'}_{S\hat{S}}|\bs^n) \ \text{for all } m\in\m{U} \ | \ \m{U} \right)\bigg)\nonumber\\
&=\mathbb{E}_{\m{U}}\bigg(\sum_{\bs^n\in\m{T}_{\varepsilon'}^{(n)}(P_S)}P'_{\bS^n}(\bs^n)\nonumber\\
&\prod_{m\in\m{U}}\Pr\left(\bhS(m)\notin\m{T}_{\varepsilon}^{(n)}(P^{R'}_{S\hat{S}}|\bs^n) | \ \m{U} \right)\bigg)\nonumber\\
&=\sum_{\bs^n\in\m{T}_{\varepsilon'}^{(n)}(P_S)}P'_{\bS^n}(\bs^n)\left(\Pr\left(\bhS(1)\notin\m{T}_{\varepsilon}^{(n)}(P^{R'}_{S\hat{S}}|\bs^n)\right) \right)^{2^{nR'}}\label{Pnocodeword}
\end{align}
By Lemma~\ref{lem:MIuniversalbound} we have
\begin{align}
\Pr\left(\bhS(1)\in\m{T}_{\varepsilon}^{(n)}(P^{R'}_{S\hat{S}}|\bs^n)\right)\geq 2^{-n(I(S;\hat{S}^{R'})+\delta(\epsilon))}\nonumber,
\end{align}
where $I(S;\hat{S}^{R'})$ is the mutual information between $S$ and $\hat{S}$ under the target joint pmf $P^{R'}_{S\hat{S}}$. This implies that
\begin{align}
\bigg(\Pr\bigg(\bhS(1)&\notin\m{T}_{\varepsilon}^{(n)}(P^{R'}_{S\hat{S}}|\bs^n)\bigg)\bigg)^{2^{nR'}}\nonumber\\
&\leq \left(1-2^{-n(I(S;\hat{S}^{R'})+\delta(\epsilon))}\right)^{2^{nR'}}\nonumber\\
&\leq \exp\left\{-2^{n(R'-I(S;\hat{S}^{R'})-\delta(\epsilon))} \right\}\label{expineq}\\
&\leq \exp\left\{-2^{n(\delta-\delta(\epsilon))} \right\},\label{Rineq}
\end{align}
where~\eqref{expineq} follows from the inequality $(1-x)^k\leq e^{-kx}$ which holds for $x\in[0,1]$, and~\eqref{Rineq} follows from the definition of $P^{R'}_{\hat{S}|S}$. Combining~\eqref{Pnocodeword} and~\eqref{Rineq} we have
\begin{align}
\mathbb{E}_{\m{C},\m{U}}\left(\Pr\left(\m{E}_2(\m{U},R')\right)\right)\leq \exp\left\{-2^{n(\delta-\delta(\epsilon))} \right\}.\label{doubleexpt}
\end{align}
Thus, for any $\delta>0$ we may take $0<\Delta<\varepsilon'<\varepsilon$ sufficiently small such that $\delta(\varepsilon)<\delta$ and~\eqref{doubleexpt} can be made arbitrary small when increasing $n$. The proof is concluded by applying the following arguments:
\begin{itemize}
\item By Markov's inequality,~\eqref{doubleexpt} implies that for every $P'_S\in\m{P}(P_S)$ and almost every subset $\m{I}\subset\{1,2,\ldots,2^{nR}\}$ of cardinality $2^{nR'}$ the expectation $\mathbb{E}_{\m{C}}\left(\Pr\left(\m{E}_2(\m{I},R')\right)\right)$ vanishes for large $n$.
\item Applying Markov's inequality again, this time with respect to $\m{C}$ we see that almost all codebooks in the ensemble satisfy that for every $P'_S\in\m{P}(P_S)$ and almost every subset $\m{I}\subset\{1,2,\ldots,2^{nR}\}$ of cardinality $2^{nR'}$ the probability $\Pr\left(\m{E}_2(\m{I},R')\right)$ vanishes for large $n$.
\item As discussed above, $\Pr(\m{E}_1)\rightarrow 0$ as $n\rightarrow \infty$ for every $P'_S\in\m{P}(P_S)$, regardless of $\m{I}$ and $\m{C}$. By the union bound $\Pr\left(\m{E}(\m{I},R')\right)\leq \Pr(\m{E}_1)+\Pr\left(\m{E}_2(\m{I},R')\right)$. We therefore have that almost all codebooks in the ensemble satisfy that for every $P'_S\in\m{P}(P_S)$ and almost every subset $\m{I}\subset\{1,2,\ldots,2^{nR}\}$ of cardinality $2^{nR'}$ the probability $\Pr\left(\m{E}(\m{I},R')\right)$ vanishes for large $n$.
\item Since the distortion measure $d$ is bounded, this together with~\eqref{avgdist} implies that almost all codebooks in the ensemble satisfy that for every $P'_S\in\m{P}(P_S)$ and almost every subset $\m{I}\subset\{1,2,\ldots,2^{nR}\}$ of cardinality $2^{nR'}$ the average distortion approaches $(1+\varepsilon)D_{P_S}(R'-\delta)$ as $n$ increases.
\item Since $|\m{R}_{\Delta}|\cdot |\m{P}^{|\m{S}|}_{\Delta,p_{\text{min}}}|$ does not increase with $n$, there must exist a sequence of codebooks whose average distortion approaches $(1+\varepsilon)D_{P_S}(R'-\delta)$ for almost every subset $\m{I}\subset\{1,2,\ldots,2^{nR}\}$ of cardinality $2^{nR'}$, simultaneously for all $R'\in\m{R}_{\Delta}$, $P_S\in\m{P}^{|\m{S}|}_{\Delta,p_{\text{min}}}$ and $P'_S\in\m{P}(P_S)$.
\item The theorem follows by taking $\varepsilon\rightarrow 0$, $p_{\text{min}}\rightarrow 0$ and using the continuity of the function $D_{P_S}(R)$ with respect to $P_S$ and $R$ for a bounded distortion measure $d$~\cite{csiszarkorner}.
\end{itemize}

\begin{appendix}

\begin{proof}[Proof of Proposition~\ref{prop:mixturetypicality}]
From Proposition~\ref{prop:typicality} and the definition of $\varepsilon$-typical sequences, we have that $\pi(z|\bz^n)\leq (1+\epsilon)P_Z(z)$
for any $\bz^n\in\m{T}_{\varepsilon}^{(n)}(P_{YZ}|\by^n)$. We can therefore write
\begin{align}
&\Pr\left(\bZ^n\in\m{T}_{\varepsilon}^{(n)}(P_{YZ}|\by^n)\right)=\sum_{\bz^n\in\m{T}_{\varepsilon}^{(n)}(P_{YZ}|\by^n)}Q(\bz^n)\nonumber\\
&=\sum_{\bz^n\in\m{T}_{\varepsilon}^{(n)}(P_{YZ}|\by^n)}\int_{\theta\in\simplex}w(\theta)\prod_{i=1}^n P_{\theta}(z_i)d\theta\nonumber\\
&\geq\sum_{\bz^n\in\m{T}_{\varepsilon}^{(n)}(P_{YZ}|\by^n)}\int_{\theta\in\simplex}w(\theta)\prod_{z\in\m{Z}} P_{\theta}(z)^{n(1+\varepsilon)P_Z(z)}d\theta\nonumber\\
&\geq(1-\varepsilon)2^{n(1-\varepsilon)H(Z|Y)}\nonumber\\
& \ \ \cdot\int_{\theta\in\simplex}w(\theta) 2^{n(1+\epsilon)\sum_{z\in\m{Z}}P_Z(z)\log P_{\theta}(z)}d\theta\nonumber\\
&=(1-\varepsilon)\int_{\theta\in\simplex}w(\theta)2^{n E}d\theta,\nonumber
\end{align}
where the last inequality follows from Proposition~\ref{prop:typicality}, and
\begin{align}
E&=(1-\varepsilon)H(Z|Y)+(1+\varepsilon)\sum_{z\in\m{Z}}P_Z(z)\log P_{\theta}(z)\nonumber\\
&=-2\varepsilon H(Z|Y)+(1+\varepsilon)\left(H(Z|Y)-D(P_Z||P_\theta)-H(Z)\right)\nonumber\\
&\geq -(1+\varepsilon)\left(I(Y;Z)+D(P_Z||P_\theta)+2\varepsilon H(Z|Y)\right).\nonumber
\end{align}
as desired.
\end{proof}

\begin{proof}[Proof of Proposition~\ref{prop:dneighnorhood}]
Without loss of generality, we may assume $\m{Z}=\{1,\ldots,|\m{Z}|\}$ and that $P_Z(1)\leq P_Z(2)\leq\cdots\leq P_Z(|\m{Z}|)$.
Note that under these assumptions $(1/|\m{Z}|) \leq P_Z(|\m{Z}|)\leq 1$, and $P_Z(|\m{Z}|-1)\leq \frac{1}{2}$. Let us define the perturbation set
\begin{align}
\m{U}=\bigg\{(u_1,\ldots,u_{|\m{Z}|}) \ : & \ 0\leq u_i <\xi, \ \text{for } i=1,\ldots,|\m{Z}|-1\nonumber\\
&\ u_{|\m{Z}|}=-\sum_{i=1}^{|\m{Z}|-1}u_i \bigg\},\nonumber
\end{align}
and the set $\m{V}=P_Z+\m{U}$, where the sum is in the Minkowski sense. Clearly, $\m{V}\subset\simplex$ for any $0<\xi<1/|\m{Z}|^2$, and its Lebesgue measure on the simplex $\simplex$ is $|\m{V}|=\xi^{|\m{Z}|-1}$. In addition, for any $P_{\theta}\in\m{V}$ we have
\begin{align}
D(P_Z||P_\theta)&=\sum_{z\in\m{Z}}P_Z(z)\log\frac{P_Z(z)}{P_\theta(z)}\nonumber\\
&\leq P_Z(|\m{Z}|)\log\frac{P_Z(|\m{Z}|)}{P_Z(|\m{Z}|)-\xi(|\m{Z}|-1)}\nonumber\\
&\leq\log\frac{1/|\m{Z}|}{(1/|\m{Z}|)-\xi(|\m{Z}|-1)}\nonumber\\
&\leq\log\frac{1}{1-\xi|\m{Z}|^2}.\nonumber
\end{align}
\end{proof}

\end{appendix}

\bibliographystyle{IEEEtran}
\bibliography{SubsetUniversal_bib}

\end{document}